\newtheorem{prop}{Property}
\newtheorem{prp}{Proposition}
\newtheorem{lem}{Lemma}
\newtheorem{Def}{Definition}
\definecolor{linkcol}{rgb}{1,0,0} 
\definecolor{citecol}{rgb}{0,0,1} 
\begin{document}

\noindent \textbf{{\LARGE Partitioned conditional generalized linear \\ models for categorical data}}

\vspace*{1cm}

\noindent \textbf{Jean Peyhardi$^{1,3}$, Catherine Trottier$^{1,2}$, Yann Gu\'edon$^{3}$}

\vspace*{0.2cm}

{\it
\noindent $^{1}$ UM2, Institut de Math\'ematiques et Mod\'elisation de Montpellier.
 
\noindent $^{2}$ UM3, Institut de Math\'ematiques et Mod\'elisation de Montpellier.

\noindent $^{3}$ CIRAD, AGAP et Inria, Virtual Plants, 34095 Montpellier.
}

\vspace*{0.2cm}

\noindent E-mail for correspondence: \textit{jean.peyhardi@univ-montp2.fr}

	\paragraph*{Abstract}
	In categorical data analysis, several regression models have been proposed for hierarchically-structured response variables, e.g. the nested logit model \citep{mcfadden78}. But they have been formally defined for only two or three levels in the hierarchy. Here, we introduce the class of partitioned conditional generalized linear models (PCGLMs) defined for any numbers of levels. The hierarchical structure of these models is fully specified by a partition tree of categories. Using the genericity of the $(r,F,Z)$ specification, the PCGLM can handle nominal, ordinal but also partially-ordered response variables.

		\paragraph*{Keywords:} hierarchically-structured categorical variable, partition tree, nominal variable, ordinal variable, partially-ordered variable, GLM specification.

	\section{Introduction}
	Categorical data are often based on a hierarchical structure. Although this may seem natural for partially-ordered or even ordinal data, it still makes sense for nominal data. Several partitioned conditional regression models have been proposed in different applied fields, including econometrics, medicine and psychology. The most well-known is the nested logit model, introduced by \citet{mcfadden78} in econometrics for qualitative choice (i.e. nominal categories). In the same field, \citet{morawitz} introduced the two-step model to take account of hierarchy among ordinal choices. This model is also used in medicine when ordered categories can be decomposed into coarse and fine scales \citep{tutz1989compound}. The partitioned conditional model for partially-ordered set (POS-PCM) was introduced in medicine  by \citet{zhang}.
	
	Compared to simple regression models for categorical data, e.g. the multinomial logit and the odds proportional logit models, partitioned conditional models capture several latent mechanisms. The event $\{Y=j\}$ is decomposed into several steps corresponding to the latent hierarchical structure, with these steps being potentially influenced by different explanatory variables. This approach leads to more flexible models with often a better fit and an easier step-by-step interpretation. In this chapter we introduce the directed trees as the main tool used to formalize the hierarchical structure among categories.
	
	Until now, partitioned conditional models have been formally defined only for two or three levels in the hierarchy. Furthermore, they all assume that the hierarchical structure among the categories is \textit{a priori} known. Our first contribution is to use directed trees to specify the hierarchical structure. This enables us to define partitioned conditional models for an arbitrary number of levels. Moreover, using the genericity of the $(r,F,Z)$ specification - see \cite{peyhardi2014new} - we develop an extended class of partitioned conditional models for nominal, ordinal but also partially-ordered data. Finally, in the case of ordinal data, instead of considering that the hierarchical structure is known \textit{a priori}, we propose to recover it.

	The $(r,F,Z)$ specification of a GLM for categorical data is reviewed in section \ref{PCGLM_specification}, and partition trees are defined. We use these two main building blocks to define and estimate the class of partitioned conditional GLMs.
	
	Sections \ref{section_nominal}, \ref{section_ordinal} and \ref{section_partially_ordinal} extend three existing hierarchically-structured models, revisiting them with the proposed partitioned conditional GLM framework. These three sections focus respectively on the nested logit model for nominal data, the two-step model for ordinal data and the POS-PCM for partially-ordered data. Section \ref{section_ordinal} also describes a model selection procedure for ordinal data, derived from the indistinguishability procedure described by \citet{anderson84}, which selects the partition tree and the explanatory variables at the same time.
	
	This procedure is illustrated in section \ref{section_applications} using the back pain prognosis example previously analysed by \citet{anderson84}. Our methodology for partially-ordered data is then illustrated using a pear tree example.

	\section{Partitioned conditional GLMs}\label{PCGLM_specification}
	This section briefly outlines the $(r,F,Z)$ specification of a GLM for categorical data and its estimation; see \citet{peyhardi2014new} for more details. The partition tree is then defined in order to specify the hierarchical structure among categories. Finally, we introduce the class of partitioned conditional GLMs and describe the estimation of such models.
	
		\subsection{\textit{(r,F,Z)} specification of GLM for categorical data}\label{rFZ_subsection}
The definition of a GLM includes the specification of a link function $g$ which is a diffeomorphism from $ \mathcal{M}= \lbrace \pi \in \; ]0,1[^{J-1}  \vert \sum_{j=1}^{J-1} \pi_j < 1\rbrace$ to an open subset $\mathcal{S}$ of $\mathbb{R}^{J-1}$. This function links the expectation $\pi=E[Y|X$=$x]$ and the linear predictor $\eta=(\eta_1,...,\eta_{J-1})^t$. It also includes the parametrization of the linear predictor $\eta$, which can be written as the product of the design matrix $Z$ (as a function of $x$) and the vector of parameters $\beta$ \citep{tutz_book}. All the classical link functions $g=(g_1,\ldots,g_{J-1})$ described in the literature \citep{agresti,tutz2011}, rely on the same structure which we propose to write as
\begin{align}
g_j  =  F^{-1} \circ r_j  , \;\; j=1,\ldots,J-1 . \label{structure}
\end{align}
where $F$ is a continuous and strictly increasing cumulative distribution function (cdf) and $r=(r_1,\ldots,r_{J-1})^t$ is a diffeomorphism from $\mathcal{M}$ to an open subset $\mathcal{P}$ of $]0,1[^{J-1}$. Finally, given $x$, we propose to summarize a GLM for a categorical response variable by the $J-1$ equations
\[ r(\pi)=\mathcal{F}(Z \beta), \label{equations} \]
where $\mathcal{F}(\eta) = (F(\eta_1),\ldots,F(\eta_{J-1}))^T$. In the following, we describe in more detail the components $r$, $F$ and $Z$.

	\paragraph*{Ratio $\boldsymbol{r}$ of probabilities:} 
	The linear predictor $\eta$ is not directly related to the expectation $\pi$ but to a particular transformation $r$ of the vector $\pi$ which we call the ratio.
In the following we will consider four particular diffeomorphisms. The \textit{adjacent}, \textit{sequential} and \textit{cumulative} ratios are respectively defined by $r_j(\pi)=\pi_j/(\pi_j + \pi_{j+1})$, $r_j(\pi)=\pi_j/(\pi_j + \ldots + \pi_J)$ and $r_j(\pi)=\pi_1 + \ldots + \pi_j$ for $j=1,\ldots,J-1$. They all include an order assumption among categories, corresponding to different motivations. On the other hand the \textit{reference} ratio, defined by $r_j(\pi)= \pi_j/(\pi_j + \pi_J)$ for $j=1,\ldots,J-1$, is devoted to nominal response variables.

	\paragraph*{Cumulative distribution function $\boldsymbol{F}$:} 
	The \textit{logistic} and \textit{normal} cdfs are the symmetric cdfs most commonly used, but \textit{Laplace} and \textit{Student} cdfs may also be useful. The \textit{Gumbel min} and \textit{Gumbel max} cdfs are the asymmetric cdfs most commonly used. Playing on the symmetrical or asymmetrical character, and the more or less heavy tails, may markedly improve model fit. In applications, Student distributions are used with small degrees of freedom.

	\paragraph*{Design matrix $\boldsymbol{Z}$:}
	Each linear predictor has the form $\eta_j=\alpha_j + x^t\delta_j$ and the vector of parameters is $ \beta = (\alpha_1,\ldots,\alpha_{J-1}, \delta_1^t, \ldots, \delta_{J-1}^t) \in \mathbb{R}^{(J-1)(1+p)} $ where $p$ is the dimension of the explanatory space $\mathcal{X}$. The model is generally defined without constraint, as this is the case for the multinomial logit model. However some linear equality constraints, called contrasts, may be added for instance between different slopes $\delta_j$. The most common contrast is the equality of all slopes, as in the odds proportional logit model. The corresponding constrained space $\mathcal{C} = \lbrace \beta\in \mathbb{R}^{(J-1)(1+p)} | \delta_1=\ldots=\delta_{J-1} \rbrace $ may be identified to $\mathbb{R}^{(J-1)+p} $. Finally the contrast space is represented by a design matrix. For example, the \textit{complete} design matrix $Z_{c}$ (without constraint) of dimension $(J-1) \times (J-1)(1+p) $ has the following form
\[ Z_{c} = \begin{pmatrix}
1 & & &x^t& & \\
 &\ddots & & &\ddots & \\
 & & 1& & &x^t
\end{pmatrix}.\]
The \textit{proportional} design matrix $Z_{p}$ (common slope) of dimension $(J-1) \times (J-1+p) $ has the following form:
\[ Z_{p} = \begin{pmatrix}
1 & & &x^t \\
 &\ddots & & \vdots \\
 & & 1& x^t
\end{pmatrix}.\]

\begin{table}[h]
\begin{center}
\begin{tabular}{|c|c|}
 \hline
  & \\
\textit{Multinomial logit model} & \\
 & \\
 $ \displaystyle P(Y=j)= \frac{\exp (\alpha_j + x^T \delta_j)}{1+ \sum_{k=1}^{J-1} \exp (\alpha_k + x^T \delta_k)}$ & (reference, logistic, complete) \\
 & \\
\hline
  & \\
\textit{Odds proportional logit model} & \\
  & \\
$ \displaystyle \log \left\lbrace \frac{P(Y\leq j)}{1-P( Y \leq j)} \right\rbrace =\alpha_j+x^T\delta$ & (cumulative, logistic, proportional) \\
   & \\
\hline
   & \\
\textit{Proportional hazard model} & \\
\textit{(Grouped Cox Model)} &  \\
&  \\
$ \displaystyle  - \log  P(Y>j | Y \geq j)  = \exp (\alpha_j+x^T\delta)$ & (sequential, Gumbel min, proportional) \\
  & \\
\hline
   & \\
\textit{Adjacent logit model} & \\
&  \\
$ \displaystyle \log \left\lbrace \frac{P(Y=j)}{P(Y=j+1)} \right\rbrace =\alpha_j+x^T\delta_j$ & (adjacent, logistic, complete) \\
  & \\
\hline
\end{tabular}
\end{center}
\caption{\label{table_rFZ}$(r,F,Z)$ specification of four classical GLMs for categorical data.}
\end{table}

\vspace*{0.5cm}

The triplet $(r,F,Z)$ will play a key role in the following since each GLM for categorical data will be specified by one of these triplets. Table \ref{table_rFZ} shows the specification of four classical models. This specification eases the comparison of GLMs for categorical response variables. Moreover, it can be used to define an extended set of GLMs for nominal response variables by (reference, $F$, $Z$) triplets, which includes the multinomial logit model.

Finally, a single estimation procedure based on Fisher's scoring algorithm can be applied to all the GLMs specified by an $(r,F,Z)$ triplet. The score function can be decomposed into two parts, where the first, unlike the second, depends on the $(r,F,Z)$ triplet.
\begin{equation}
 \frac{\partial l}{\partial \beta}   =   \underbrace{Z^t \; \frac{\partial \mathcal{F}}{\partial \eta} \;  \frac{\partial \pi}{\partial r}}_{(r,F,Z)\textnormal{ dependent part}}  \underbrace{\textnormal{Cov}(Y|x)^{-1} \; (y- \pi)}_{(r,F,Z)\textnormal{ independent part}} . 
\end{equation}
We only need to evaluate the associated density function values $\{f(\eta_j)\}_{j=1,\ldots,J-1}$ to compute the diagonal Jacobian matrix $\partial \mathcal{F} / \partial \eta$. For details on computation of the Jacobian matrix $\partial \pi / \partial r$ for each ratio, see appendix of \citep{peyhardi2014new}.

		\subsection{Definition of partitioned conditional GLMs}
The main idea is to recursively partition the $J$ categories then specify a conditional GLM at each step. This type of model is therefore referred to as partitioned conditional GLM. Models of this class have already been proposed, e.g. the nested logit model \citep{mcfadden78}, the two-step model \citep{tutz1989compound} and the partitioned conditional model for partially-ordered set (POS-PCM) \citep{zhang}. Our proposal can be seen as a generalization of these three models that benefits from the genericity of the $(r,F,Z)$ specification. In particular, our objective is not only to propose GLMs for partially-ordered response variables but also to differentiate the role of explanatory variables for each partitioning step using different design matrices and different explanatory variables. We are seeking also to formally define the partitioned conditional GLMs for any number of levels in the hierarchy. Hence we need to introduce definitions and notations for directed trees. 
\begin{Def}
A directed tree $\mathfrak{T}$ is said to be a \textbf{partition tree} of $\{1,\ldots,J\}$ if
\begin{itemize}
	\item $\{1,\ldots,J\}$ is the root of $\mathfrak{T}$,
	\item sibling vertices constitute a non identical partition of their parent node,
	\item each singleton $\{j\}$ belongs to $\mathfrak{T}$.
\end{itemize}
\end{Def}
In the following, $\mathcal{V}^*$ is the set of non-terminal vertices of $\mathfrak{T}$ and for each $v \in \mathcal{V}^*$, $Ch(v) = \{  \Omega_1^v, \ldots, \Omega^v_{J_v}\}$ is the set of indexed children of $v$. The children must be indexed because the GLMs are not necessarily invariant under permutation of the response categories \citep{peyhardi2014new}. Children $ \Omega_1^v, \ldots, \Omega^v_{J_v} $ are presented from left to right and $\Omega^v_{J_v} $ is considered as the reference child by convention. Also, for each vertex $v$ (except the root), $Pa(v)$ denotes the parent of $v$ and $An^*(v)$ denotes the ancestors set of $v$ except the root.

\begin{Def}
Let $J \geq 2$ and $1\leq k\leq J-1 $. A \textbf{$\boldsymbol{k}$-partitioned conditional GLM of categories $\boldsymbol{\{1, \ldots , J\}}$} ($k$-PCGLM) is specified by
\begin{itemize}
	\item a \textbf{partition tree} $\boldsymbol{\mathfrak{T}}$ of $\{1, \ldots , J\}$ with $card(\mathcal{V}^*)=k$,
	\item a \textbf{collection of models} $\displaystyle \boldsymbol{\mathfrak{C}=  \{ (r^v,F^v,Z^v(x^v)) \; | \; v \in \mathcal{V}^* \} }$ for each conditional probability vector $ \displaystyle \pi^v = (\pi^v_1, \ldots, \pi^v_{J_v-1} ) $, where $\pi^v_j = P(Y \in \Omega_j^v |Y \in v; x^v)$ and $x^v$ is a sub-vector of $x$ associated with vertex $v$.
\end{itemize}
\end{Def}
\noindent With this definition, the probability of each category $j$ is then obtained by
\begin{equation}
 P(Y=j|x) = P(Y=j| Y \in Pa(j), x^{Pa(j)}) \prod_{v \in An^*(\{j\})} P(Y \in v | Y \in Pa(v), x^{Pa(v)}), \nonumber 
 \end{equation}
where $ P(Y \in v | Y \in Pa(v), x^{Pa(v)}) $ is described by the GLM of $\mathfrak{C}$ associated with vertex $Pa(v)$. 

The class of PCGLMs for categorical response variables is the set of $k$-PCGLMs for $1 \leq k \leq J-1$. The boundary cases are classical GLMs. For instance for $k=1$ (see figure \ref{boundary_case} on the left), the root is the only non-terminal vertex of $\mathcal{T}$, thus we have a classical GLM for categories $\{1,\ldots,J\}$. For $k=J-1$, $\mathcal{T}$ is a binary tree and thus $\mathfrak{C}$ is a collection of $J-1$ GLMs for binary response variables. In this case all the ratios are the same. Moreover, if depth of $\mathcal{T}$ is $J-1$ and all vertices $v \in \mathcal{V}^*$ share common cdf $F$ and explanatory variables $x$, then the $(J-1)$-PCGLM is exactly the (sequential, $F$, complete) GLM (see figure \ref{boundary_case} on the right).
\begin{figure}[h]
\centering
\includegraphics[width=0.25\textwidth]{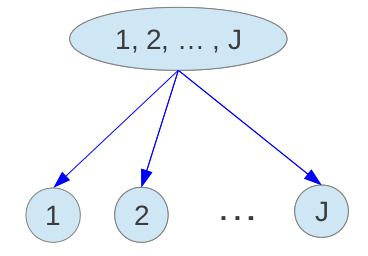}
\includegraphics[width=0.1\textwidth]{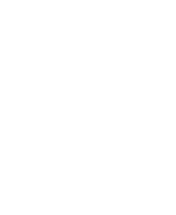}
\includegraphics[width=0.35\textwidth]{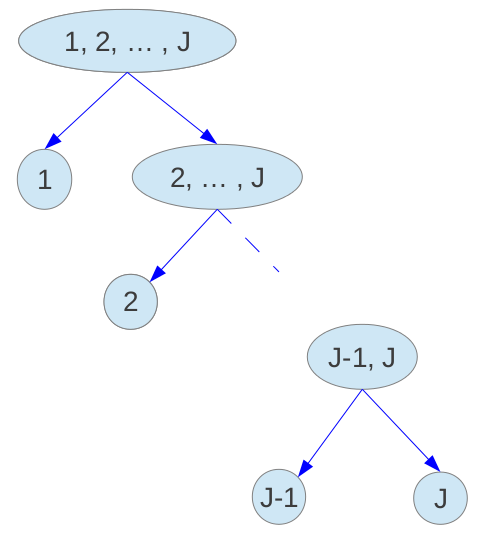}
\caption{$1$-partition tree and $(J-1)$-partition tree.}\label{boundary_case}
\label{1-tree}
\end{figure}

There are exactly $J-1$ independent equations to define a simple GLM for categorical data. As noticed by \citet{zhang}, we must check the identifiability of the PCGLMs. 
\begin{prp}\label{prt_identifiability}
Let $J \geq 2$ and $1\leq k\leq J-1 $. There are exactly $J-1$ independent equations for any $k$-PCGLM of categories $\{1, \ldots , J\}$.
\end{prp}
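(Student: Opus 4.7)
The plan is to count the equations contributed by the PCGLM vertex by vertex and then reduce the total to $J-1$ by a purely combinatorial identity on the partition tree $\mathfrak{T}$. At each non-terminal vertex $v \in \mathcal{V}^*$, the conditional GLM $(r^v,F^v,Z^v(x^v))$ is a GLM for a categorical variable with $J_v$ classes (the indexed children of $v$), so by the $(r,F,Z)$ specification of Section~\ref{rFZ_subsection} it contributes exactly $J_v-1$ equations, one per non-reference child. The conditional vectors $\pi^v$ for distinct $v$ can be chosen freely in their respective open simplices without any cross-constraint, because any such choice yields, through the product factorisation $P(Y=j|x)=\prod\cdots$ given in the definition, a valid joint distribution on $\{1,\ldots,J\}$. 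The blocks of equations attached to different non-terminal vertices are therefore functionally independent, and the total count is $\sum_{v\in\mathcal{V}^*}(J_v-1)$.

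The first structural fact I would establish is that the leaves of $\mathfrak{T}$ are exactly the $J$ singletons. A singleton admits no non-identical partition, so by the second axiom of a partition tree it cannot be non-terminal; hence singletons are leaves. Conversely, suppose $v$ is a non-singleton and pick $j \in v$. The vertex $\{j\}$ lies in $\mathfrak{T}$ by the third axiom. Walking from the root down to $\{j\}$, the partition property forces each successive vertex to be the unique child of its predecessor that contains $j$, so every vertex on this path contains $j$. If $v$ were not on this path, the paths from the root to $v$ and to $\{j\}$ would diverge at some common ancestor $u$ into two distinct children of $u$ both containing $j$, contradicting the disjointness of siblings. Hence $v$ lies on the path, strictly above $\{j\}$, and therefore has children. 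This shows every non-singleton vertex is non-terminal, so $|\mathcal{V}| = J + k$.

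The second ingredient is an edge count. Every non-root vertex has a unique parent, and that parent is necessarily non-terminal, so
\[
\sum_{v\in\mathcal{V}^*} J_v \;=\; |\mathcal{V}| - 1 \;=\; J + k - 1.
\]
Subtracting $|\mathcal{V}^*|=k$ yields $\sum_{v\in\mathcal{V}^*}(J_v-1)=J-1$, the desired count. The only step with any real content is the leaf characterisation, which is where both axioms of the partition tree are used together; the remaining arithmetic is an identity valid for any rooted tree whose internal nodes coincide with the non-terminal vertices of the partition. The two boundary cases, $k=1$ (a single GLM on $J$ categories) and $k=J-1$ (a binary partition tree with $J_v=2$ everywhere), fall out as immediate sanity checks.
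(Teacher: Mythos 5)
Your proof is correct, but it takes a genuinely different route from the paper's. The paper argues by recursion on $k$: it prunes the tree at the root, writes $|\mathcal{M}| = |\mathcal{M}^{r}| + \sum_{v \in Ch(r)\cap\mathcal{V}^*}|\mathcal{M}_v|$, applies the inductive hypothesis $|\mathcal{M}_v| = |v|-1$ to each sub-PCGLM, and telescopes using the fact that the children of the root partition $\{1,\ldots,J\}$. You instead do a single global count, $\sum_{v\in\mathcal{V}^*}(J_v-1)$, and close it with the vertex/edge identity for rooted trees, $\sum_{v\in\mathcal{V}^*}J_v = |\mathcal{V}|-1$, after establishing that the leaves of a partition tree are exactly the $J$ singletons so that $|\mathcal{V}| = J+k$. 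Both arguments rest on the same two facts --- each vertex model contributes $J_v-1$ equations, and the partition axioms pin down the leaf set --- but the paper keeps the second fact implicit (it is what licenses extending the sum from $Ch(r)\cap\mathcal{V}^*$ to $Ch(r)$, since terminal children are singletons contributing $|v|-1=0$), whereas you prove it explicitly via the path-to-$\{j\}$ argument. Your version buys a non-recursive, one-shot identity and makes the role of the three partition-tree axioms visible; the paper's recursion buys a formulation that mirrors the recursive definition of the model itself and directly yields the intermediate statement that every sub-PCGLM rooted at $v$ has $|v|-1$ equations. Your added remark on functional independence of the blocks (free choice of the conditional vectors $\pi^v$) is a point the paper does not address at all and is a worthwhile supplement to either proof.
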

\begin{proof}
The cardinal of a set $v \in \mathcal{V}$ is denoted by $|v|$. For each vertex $v \in \mathcal{V}^*$, $\mathcal{M}^v$ denotes the associated GLM and $\mathcal{M}_v$ the PCGLM associated with the sub-tree pruned at vertex $v$. Finally $|\mathcal{M}|$ denotes the number of independent equations of $\mathcal{M}$. Here we are reasoning recursively on $k$, the cardinal of $\mathcal{V}^*$. 
\begin{itemize}
	\item \textbf{Initialisation} For $k=1$, the $1$-PCGLM of categories $\{1, \ldots , J\}$ turns out to be a simple GLM for categorical data and we obtain the desired result.
	\item \textbf{Recursion} For $k < J-1$, let us assume, considering any subset $v$ of $\{1,\ldots,J\}$, that all the $m$-PCGLMs of $v$, such that $m \leq k$, contain exactly $|v|-1$ independent regression equations.
	
	 Now, let $\mathcal{M}$ be a $(k+1)$-PCGLM of $\{1, \ldots, J\}$. Noting $r$ the root node, we obtain the following decomposition:
\[ |\mathcal{M} | = |\mathcal{M}^{r} | + \sum_{ v \in  Ch(r) \cap \mathcal{V}^*} |\mathcal{M}_v | \]
Since the root model $\mathcal{M}^{r}$ is a GLM of the root's children, then $|\mathcal{M}^{r} |=|Ch(r)|-1$. Since each model $\mathcal{M}_v$ is a $m$-PCGLM of $v$ such that $m \leq k$, we can use the recursive assumption and obtain $|\mathcal{M}_v |=|v|-1 $. Therefore, the number of independent equations of $\mathcal{M}$ is 
\begin{align*} 
|\mathcal{M} | & = |Ch(r)|-1 + \sum_{ v \in  Ch(r) \cap \mathcal{V}^*} (|v|-1) \\
& = |Ch(r)|-1 + \sum_{ v \in  Ch(r)} (|v|-1) \\
			   & =  -1 + \sum_{ v \in  Ch(r)} |v| \\
|\mathcal{M} |& = J -1.
\end{align*}
\end{itemize}
\end{proof}
A PCGLM is fully specified by the partition tree $\mathfrak{T}$ and the associated collection $\mathfrak{C}$ of $(r,F,Z)$ models. Thus, we will specify a PCGLM by its graphical representation, with each non-terminal vertex being labelled by an $(r,F,Z)$ triplet (see figure \ref{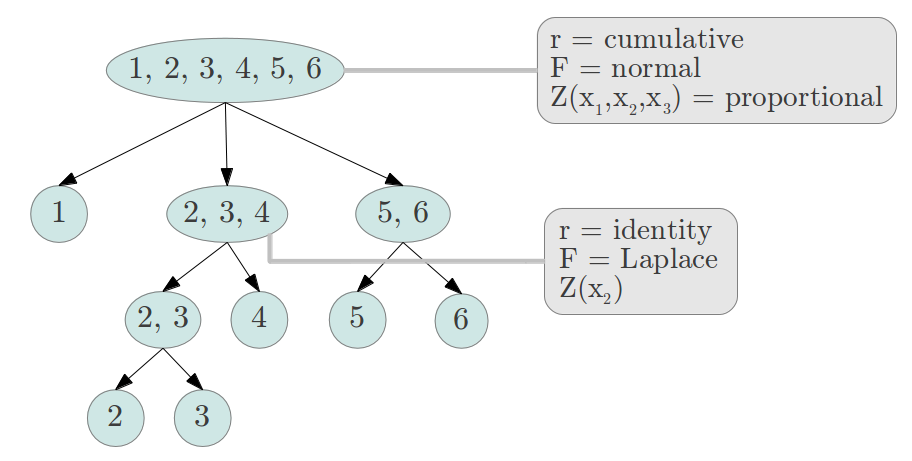} for example). In the case of a minimal response model (i.e. without explanatory variables), the component $r$ and $F$ do not play any role and therefore no label is given.

		\subsection{Estimation of PCGLMs}\label{inference}
		Using the partitioned conditional structure of the model, the log-likelihood can be decomposed as follows
\begin{equation}
 l = \sum_{v \in \mathcal{V}^*} l^v, \label{likelihood_decomposition} 
\end{equation}
where $l^v$ represents the log-likelihood of the model $\mathcal{M}^v$. 
			\subsubsection{Concavity of log-likelihood}
			Concavity of $l$ with respect to $\beta$ is equivalent to concavity of $l$ with respect to $\eta$ since
\[ \frac{\partial^2 l}{\partial \beta^t \partial \beta} = Z^t \frac{\partial^2 l}{\partial \eta^t \partial \eta} Z. \]
Using \eqref{likelihood_decomposition} it can be remarked that the Hessian matrix $\partial^2 l / \partial \eta^t \partial \eta$ is a block diagonal matrix
\[ \frac{\partial^2 l}{\partial \eta^t \partial \eta} = \mbox{diag} \left( \frac{\partial^2 l^v}{\partial \eta^{vt} \partial \eta^v} \right)_{v \in \mathcal{V}^*}. \]
Concavity of $l$ is thus equivalent to concavity of $l^v$ for all $v \in \mathcal{V}^*$. Results of concavity for simple $(r,F,Z)$ models can be used to obtain concavity for several PCGLMs. Here focus is made on two particular cases:
				\paragraph*{Binary partition trees}
				In the binomial regression case, the log-likelihood is strictly concave when $\log F$ and $\log (1-F)$ are strictly concave \citep{pratt1981concavity}. Using result of convex analysis it can be shown for normal, Laplace, Gumbel min and Gumbel max distribution \citep{log-concavity}. Finally, the log-likelihood is strictly concave for all $(J-1)$-PCGLMs (i.e. binary PCGLMs) when $\log F^v$ and $\log (1-F^v)$ are strictly concave for all $v \in \mathcal{V}^*$.
				\paragraph*{Canonical GLMs}
				In the case of a canonical GLM for categorical data - i.e. a (reference, logistic, $Z$) model - the log-likelihood is strictly concave because the observed information and the Fisher's information matrices are equal. Therefore the log-likelihood is strictly concave for all PCGLMs defined with a collection of canonical GLMs.
				
			\subsubsection{Fisher's scoring algorithm}

			\paragraph*{First hypothesis:} $\beta^v \neq \beta^{v'} \; \forall (v,v') \in \mathcal{V}^* \times \mathcal{V}^*$ \\
Each component $l^v$ can be maximised individually since GLMs attached to non-terminal vertices do not share common regression coefficients. Thus, each $(r^v,F^v,Z^v(x^v))$ model, corresponding to the sub-dataset $\{(y,x^v) | \; y \in v\}$, can be estimated separately using the procedure described in subsection \ref{rFZ_subsection}. The score $ \partial l / \partial \beta =  \partial \eta / \partial \beta \; \partial l / \partial \eta$ has a block structure, as illustrated considering only the two vertices $v$ and $v^{\prime}$
\[
\newcommand*{\temp}{\multicolumn{1}{r|}{}} \left[
\begin{array}{cccccc}
 \cline{1-6}
&  & \temp  & & &  \\ 
& \displaystyle \{Z^v(x^v)\}^t & \temp  & & &  \\ 
& & \temp & & &  \\ \cline{1-6}
 & & \temp & & & \\
& & \temp & & \displaystyle \{Z^{v^{\prime}}(x^{v^{\prime}})\}^t &  \\ 
& & \temp & & & \\ \cline{1-6}
\end{array} \right]
\left[
\begin{array}{ccc}
 \cline{1-3}
&  &     \\ 
& \displaystyle \frac{\partial l^{v}}{\partial \eta^{v}} &    \\ 
& &  \\ \cline{1-3}
 & &  \\
&  \displaystyle \frac{\partial l^{v'}}{\partial \eta^{v'}} &  \\ 
& &  \\ \cline{1-3}
\end{array} \right].\nonumber
\]

			\paragraph*{Second hypothesis:} $ \exists v \neq v^{\prime} \in \mathcal{V}^*  | \; \beta^v = \beta^{v'}$ \\
In this case we assume not only that explanatory variables are the same for these two nodes, but also that $|Ch(v)|=|Ch(v^{\prime})|$. This corresponds to particular models that are appropriate in very few practical situations. Such a situation is shown in section \ref{poset_tree}. Score computation is almost the same as in the previous case, only the design matrix has to be changed and is no longer defined as a diagonal block matrix, as illustrated considering only the two vertices $v$ and $v^{\prime}$
\[
\newcommand*{\temp}{\multicolumn{1}{r|}{}}
\left[
\begin{array}{ccc}
 \cline{1-3}
&  &     \\ 
& \displaystyle \{Z^v(x^v)\}^t &    \\ 
& &  \\ \cline{1-3}
 & &  \\
&  \displaystyle \{Z^v(x^v)\}^t &  \\ 
& &  \\ \cline{1-3}
\end{array} \right]
\left[
\begin{array}{ccc}
 \cline{1-3}
&  &     \\ 
& \displaystyle \frac{\partial l^{v}}{\partial \eta^{v}} &    \\ 
& &  \\ \cline{1-3}
 & &  \\
&  \displaystyle \frac{\partial l^{v'}}{\partial \eta^{v'}} &  \\ 
& &  \\ \cline{1-3}
\end{array} \right].\nonumber
\]

	\section{PCGLMs for nominal data}\label{section_nominal}
	
		\subsection{PCGLM specification of the nested logit model}
		The most well known partitioned conditional model for nominal data is the nested logit model defined by \citet{mcfadden78} in the framework of individual choice behaviour. This model was introduced in order to avoid the inconsistency of the independence of irrelevant alternatives (IIA) property in some situations. Let us illustrate this inconsistency using the classical example of blue and red buses \citep{debreu1960}. Assume we are interested in the urban travel demand, with the simple situation of two alternatives: $A=\{$blue bus, car$\}$. Suppose that the consumer has no preference between the two alternatives; this means that $P_A(\mbox{blue bus})=P_A(\mbox{car})=1/2$. Suppose now that the travel company adds some red buses and the consumer again has no preference between blue and red buses; this means that $P_B(\mbox{blue bus})=P_B(\mbox{red bus})$ where $B=\{$blue bus, red bus, car$\}$. Using the IIA property we obtain
\[ 1 = \frac{P_A(\mbox{blue bus})}{P_A(\mbox{car})} = \frac{P_B(\mbox{blue bus})}{P_B(\mbox{car})}.\]
Finally we obtain $P_B(\mbox{blue bus}) = P_B(\mbox{red bus}) = P_B(\mbox{car}) = 1/3$, whereas we expected the probabilities $P_B(\mbox{blue bus}) = P_B(\mbox{red bus})=1/4$ and $P_B(\mbox{car}) = 1/2$.

In this example the IIA property is not appropriate because two alternatives are very similar and also share many characteristics. The nested logit model captures the similarities between close alternatives by partitioning the choice set into ''nests" (groups). Thus, the consumer chooses first between bus and car according to price, travel time, $\ldots$ and secondly between the two buses according to preferred color. More generally, suppose that alternatives can be aggregated according to their similarities; this means that all alternatives of the same nest $N_l$ share attributes $x^l$, whereas other alternatives do not. In the following, the nested logit model is presented with only two levels. Let $L$ be the number of nests obtained by partitioning the set of $J$ alternatives.
\[ \{1,\ldots,J\} = \bigcup_{l=1}^L N_l. \]
If $j$ denotes an alternative belonging to the nest $N_l$, then the probability of alternative $j$ is decomposed as follows
\begin{equation}
 P(Y=j|x) = P(Y=j|Y\in N_l;x^l) P(Y\in N_l | x^0, IV), \label{part/cond}
\end{equation}
where $IV=(IV_1,\ldots,IV_L)$ denotes the vector of \textit{inclusive values} described thereafter, $x^0$ are the attributes which influence only the first choice level between nests and $x=(x^0,x^1,\ldots,x^L)$. Each probability of the product \eqref{part/cond} is determined by a multinomial logit model as follows
\[ P(Y=j|Y\in N_l;x^l) =  \frac{\exp(\eta_j^l)}{\displaystyle \sum_{k\in N_l}\exp(\eta_k^l)}, \]
and
\[ P(Y\in N_l | x^0, IV) = \frac{\exp(\eta_l^0+\lambda_l IV_l)}{\displaystyle \sum_{k=1}^L\exp(\eta_k^0+\lambda_k IV_k)}, \]
where 
\[ IV_l = \ln \left\lbrace \sum_{k \in N_l} \exp(\eta_k^l) \right\rbrace. \]
The deterministic utilities (predictors) $\eta_j^l$ are function of attributes $x^l$ and $\eta_l^0$ are function of attributes $x^0$. In practice they are linear with respect to $x$. In some situations the attribute values depend on the alternative. For example, the travel price $x_j$ depends on the $J$ alternatives bus, car, metro, etc. In this case, the conditional logit model was introduced by \citet{mcfadden74}, using the linear predictors $\eta_j=\alpha_j +x_j^t \delta$ for $j=1,\ldots,J$.

\begin{figure}[h]
\centering
\includegraphics[width=0.8\textwidth]{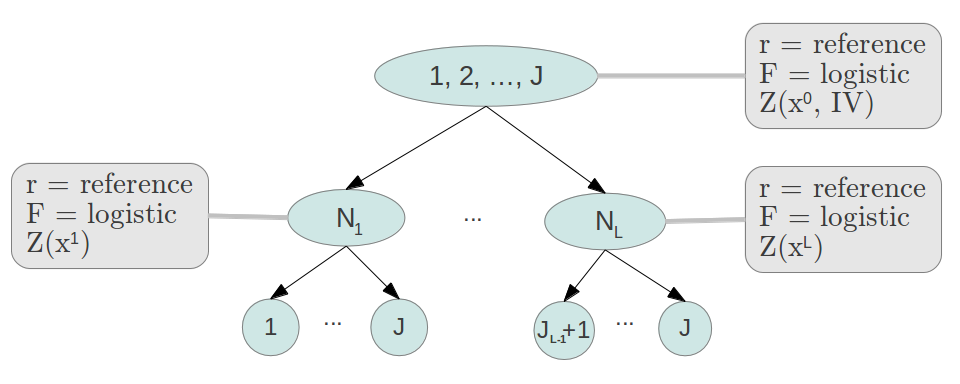}
\caption{\label{NLM_PCGLM} PCGLM specification of the nested logit model.}
\end{figure}

Because of the inclusive values, the nested logit model must be estimated in two steps. In the first step the $L$ models of the second level can be estimated separately because the parameters $\beta^l$ are different in each nest. The inclusive values $IV_l$ of each nest can then be computed and used, in a second step, to estimate the first level model. More precisely, the parameter $\beta^0$ of the first level is estimated using the design matrix
\[ Z(x^0,IV) = \left(
\begin{array}{cccccccccc}
1 & & & x^{0t} & & &IV_1 & & \\
 &\ddots & &  &\ddots& & &\ddots & \\
 & & 1&  & &x^{0t} && & IV_{L-1}\\
\end{array}
\right) \]
for a multinomial logit model and the design matrix
\[ Z(x^0,IV) = \left(
\begin{array}{ccccccccc}
 1 & & & \tilde{x}^{0t}_1  &IV_1 & & \\
 &\ddots & & \vdots   &  &\ddots & \\
 & & 1&\tilde{x}^{0t}_{L-1}   && & IV_{L-1}\\
\end{array}
\right) \]
for a conditional logit model, where $ \tilde{x}_l^0 = x_l^0 - x_L^0$ for $l=1,\ldots,L-1$. Finally, the nested logit model is fully specified by the PCGLM in figure \ref{NLM_PCGLM}.

		\subsection{PCGLMs for qualitative choices}
It has been shown that the nested logit model can be considered as a random utility model (RUM) if and only if $0 < \lambda_l \leq 1$ for $l=1,\ldots,L$ \citep{mcfadden78}. The particular case of $\lambda_l=1$ leads to the simple multinomial logit model. If the random utility maximisation assumption is relaxed, the model becomes more flexible. The case $\lambda_l=0$ leads to a particular PCGLM for nominal data with different explanatory variables for each node. Therefore we propose a flexible PCGLM for qualitative choices (see figure \ref{PCGLM_nominal}), similar to the nested logit model (without IIA property) but which is not a RUM. We thus avoid difficulties of parameter $\lambda_l$ interpretation and estimation. Moreover, different link functions can be used for each node. The reference ratio must be used because the data are nominal \citep{peyhardi2014new} whereas any cdf $F$ can be chosen. The reference category can also be changed to obtain a better fit \citep{peyhardi2014new}. Finally, a PCGLM for qualitative choice is specified by
\begin{itemize}
	\item a partition tree $\mathfrak{T}$ such that the alternatives are aggregated when they share attributes (like for the nested logit model),
	\item a collection $\mathfrak{C}$ of reference models.
\end{itemize}

\begin{figure}[!h]
\centering
\includegraphics[width=0.75\textwidth]{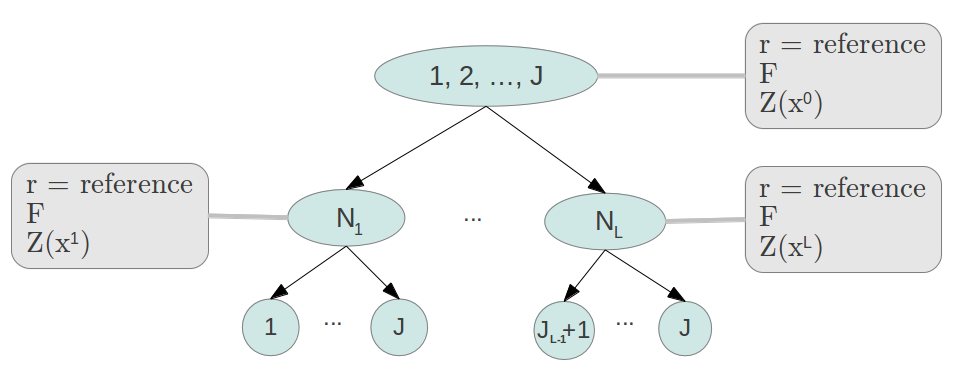}
\caption{PCGLM for qualitative choices.}\label{PCGLM_nominal}
\end{figure}

	\section{PCGLMs for ordinal data}\label{section_ordinal}
		
		\subsection{PCGLM specification of the two-step model}

The two-step model, or compound model, was defined by \citet{tutz1989compound} in order to decompose the latent mechanism of an ordinal response into two levels. Ordinal-scale response variables are commonly used in medicine and psychology for instance, to assess a patient's condition. This ordinal scale is often built from a coarse and a fine scale. 

\begin{figure}[h]
\center
\includegraphics[width=0.8\textwidth]{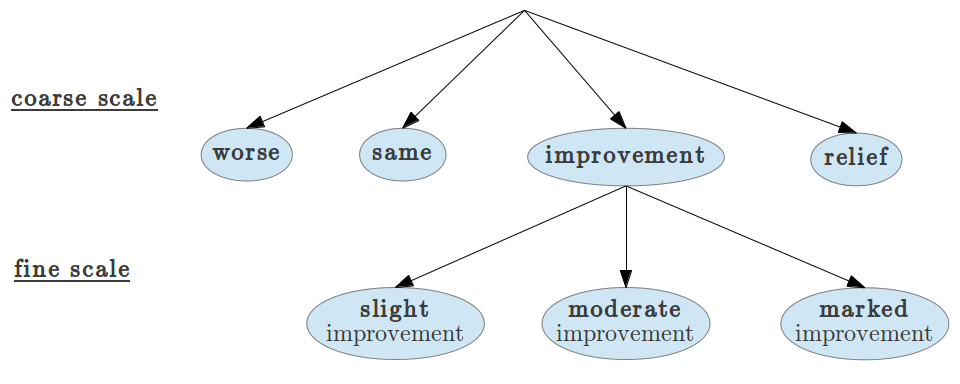}
\caption{\label{coarse_fine_scale} Two-scale back pain assessment.}
\end{figure}

\vspace*{0.5cm}

For the back pain prognosis dataset described by \citet{doran1975}, the response variable $y$ is the assessment of back pain after three weeks of treatment using the six ordered categories: worse (1), same (2), slight improvement (3), moderate improvement  (4), marked improvement (5), complete relief (6). Categories 3, 4 and 5 can be aggregated into a general category \textit{improvement}. Thus, the coarse scale corresponds to the categories: worse, same, improvement, complete relief, and the fine scale corresponds to the categories: slight improvement, moderate improvement and marked improvement (see figure \ref{coarse_fine_scale}). 

\vspace*{0.3cm}

The model can be decomposed into two levels. More precisely, the cumulative (respectively sequential) two-step model is exactly a $k$-PCGLM (see figure \ref{two-step}) with
\begin{itemize}
	\item a partition tree $\mathcal{T}$ of depth $2$ which respects the order assumption,
	\item a collection of $k$ (cumulative, $F_0$, proportional) models with common cdf $F_0$ (respectively (sequential, $F_0$, proportional)).
\end{itemize}

\begin{figure}[!h]
\includegraphics[width=0.5\textwidth]{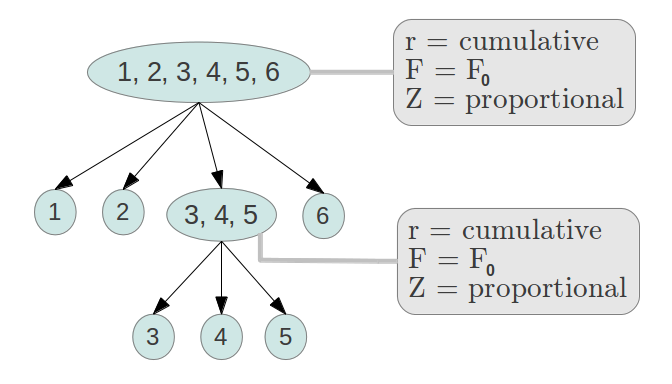}
\includegraphics[width=0.5\textwidth]{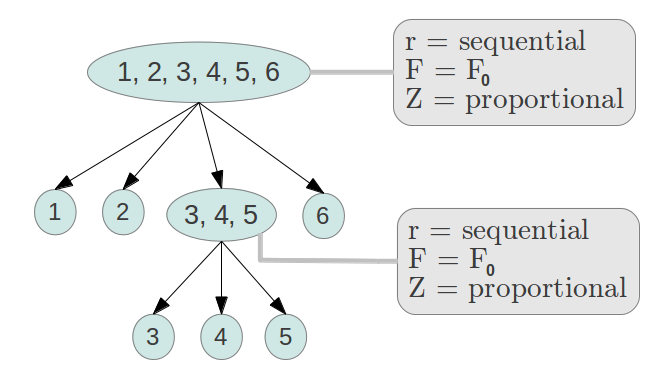}
\caption{PCGLM specification of cumulative and sequential two-step models for the back pain prognosis example.\label{two-step}}
\end{figure}

The two-step model can be extended in different ways. A partition tree with a depth of more than two can be used, providing that ordering among categories is conserved. Furthermore different link functions can be used for each non-terminal node, providing they are appropriate for ordinal data. The (adjacent, $F$, $Z$) models, with $F \neq$ logistic, can be used \citep{peyhardi2014new}.

		\subsection{Indistinguishability of response categories}
%
\citet{anderson84} introduced the stereotype model derived from the classical multinomial logit model
\begin{equation*}
 P(Y=j|x) = \frac{\exp (\alpha_j+x^t\delta_j)}{1+\sum_{k=1}^{J-1} \exp (\alpha_k+x^t\delta_k)},
\end{equation*}
using different parametrizations for the slopes $\delta_j$. For instance, he defined the one-dimensional stereotype model using the particular parametrization of slopes
\begin{equation*}
 \delta_j=\phi_j\delta, 
\end{equation*}		
for $j=1,\ldots,J-1$, where $\phi_j$ are scalars and $\delta$ is a vector.

			\subsubsection{Original Anderson's indistinguishability procedure}
			\citet{anderson84} proposed a testing procedure - useful for ordinal data - to identify successive categories that can be clearly distinguished by the explanatory variables $x$. These categories are said to be indistinguishable with respect to $x$ when the explanatory variables $x$ do not have significantly different effects on them. He proposed to aggregate the corresponding successive slope parameters $\delta_j$ and use a deviance test. More precisely he proposed an iterative procedure to locate the best splits between the categories $1,\ldots,J$ with respect to $x$. The minimal number of splits is zero, corresponding to the simple model without explanatory variable (null hypothesis $H_0$), and the maximal number of splits is $J-1$, corresponding to the classical multinomial logit model with $J-1$ different slopes. 
					
					The first step is to locate the best partition into two groups of categories. The hypothesis $H_{(2;r)}$ is then introduced
\[ H_{(2;r)} : \delta_1=\ldots=\delta_r; \;\; \delta_{r+1}=\ldots=\delta_{J}=0, \]
for $r=1,\ldots,J-1$. Comparing the corresponding log-likelihood values $l_{(2;r)}$ yields the best splitting point $r^*$ such that $l_2=l_{(2;r^*)}= \max_{r} l_{(2;r)}$. The hypothesis $H_{(2;r^*)}$ is tested against $H_0$, using the deviance statistic $2(l_2-l_0)$ which follows a $\chi^2_p$ distribution under $H_0$. Finally, if the splitting point $r^*$ is accepted, the procedure must be restarted in parallel for the two groups $\lbrace 1,\ldots,r^*\rbrace$ and $\lbrace r^*+1,\ldots,J \rbrace$ in order to obtain the best partition into three groups. For example, the procedure is restarted on group $\lbrace r^*+1,\ldots,J \rbrace$ and the hypothesis $H_{(3;r^*,s)}$ is tested
\[ H_{(3;r^*,s)} : \delta_1=\ldots=\delta_{r^*}; \;\; \delta_{r^*+1}=\ldots=\delta_s; \;\; \delta_{s+1}=\ldots=\delta_{J}=0. \]
By comparing the corresponding log-likelihood values of the two procedures in parallel, we obtain the best second splitting point $s^*$ (or respectively $t^*$) such that $l_3=_{(3;r^*,s^*)}= \max_{s} l_{(3;r^*,s)}$ (respectively $l_3=_{(3;t^*,r^*)}=\max_{t} l_{(3;t,r^*)}$). The hypothesis $H_{(3;r^*,s^*)}$ (or $H_{(3;t^*,r^*)}$) is then tested against $H_{(2;r^*)}$, using the deviance statistic $2(l_3-l_2)$ which follows a $\chi^2_p$ distribution under $H_{(2;r^*)}$.

This is a dichotomous partitioning procedure with at most $J(J-1)/2$ different parametrizations to test. It should be noted that this procedure is simplified for the one-dimensional stereotype model since the equality between slopes $ \delta_1=\ldots=\delta_r$ becomes equality between scalar parameters $\phi_1=\ldots=\phi_r$. In practice, only this particular case of the procedure is used.

			\subsubsection{Indistinguishability procedure with $\boldsymbol{(r,F,Z)}$ specification}
			Here we express the indistinguishability procedure in terms of canonical models by simply changing the design matrix. In fact, the hypothesis $ H_{(2;r)} $ corresponds to the canonical (reference, logistic, $Z_r$) model with
\[ Z_{r} = \left[\begin{array}{ccccc|c}
1 & & & &  & x^t \\
 &\ddots &  & & & \vdots \\
 & & \ddots & &  & x^t \\
\cline{6-6}
 & &   &\ddots & & \\
 & &   & &1 & \\
\end{array}\right],\]
the design matrix with $r$ repetitions of $x^t$, whereas the null hypothesis $H_0$ corresponds to the $(J-1)$-identity design matrix. If the first splitting point $r^*$ is accepted, the procedure is restarted to test the hypothesis $H_{(3;r^*,s)}$ which corresponds to the (reference, logistic, $Z_{r^*,s}$) model with
\[ Z_{r^*,s} = \left[\begin{array}{cccccccc|c|c}
1 & &  & &  & &  & & x^t & \\
  &\ddots &  & &  & &  & & \vdots & \\
 & & \ddots & &  & &  & & x^t & \\
 \cline{9-10}
 & &  & \ddots &  & &  & &  & x^t \\
 & &  &  & \ddots & &  & &  & \vdots  \\
 & &  &  &  & \ddots &  & &  & x^t \\
 \cline{9-10}
 & &  &  &  & & \ddots & &  &  \\
 & &  &  &  & &  &1 &  &  \\
\end{array}\right],\]
the design matrix with $r^*$ repetitions of $x^t$ for the first block and $s-r^*$ repetitions of $x^t$ for the second block. The indistinguishability procedure, specified in terms of the $(r,F,Z)$ triplet, can be seen as a design matrix selection procedure. 

			\subsubsection{Indistinguishability procedure with PCGLM specification}
			Here we express the indistinguishability procedure in terms of PCGLM by simply changing the partition tree. In fact any canonical (reference, logistic, $Z$) model with a block structured design matrix $Z$ is equivalent to a PCGLM of depth $2$ with the canonical (reference, logistic, complete) model for the root and minimal response models for other non-terminal nodes. Let us describe this result in detail using the block structured design matrix $Z_{r,s}$.
				
\begin{lem}\label{lemma}
The canonical model (reference, logistic, $Z_{r,s}$) is equivalent to the PCGLM specified in figure \ref{disting_tree}.
\begin{figure}[!h]
\centering
\includegraphics[width=0.7\textwidth]{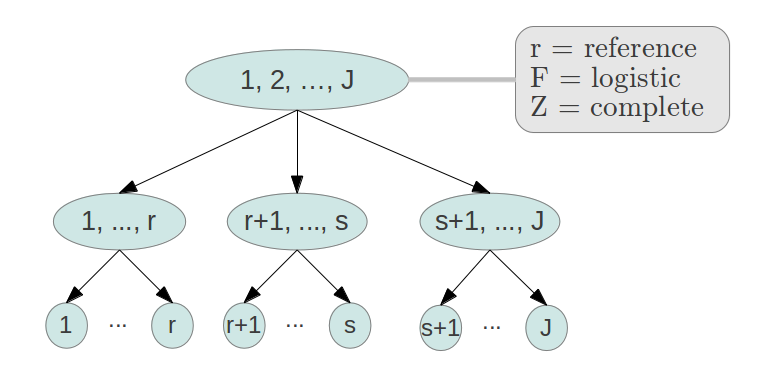}
\caption{PCGLM specification of indistinguishability hypothesis $H_{(3,r,s)}$.}\label{disting_tree}
\end{figure}
\end{lem}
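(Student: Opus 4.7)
The plan is to compare the two parametrizations category by category and show that the joint distribution $P(Y=j\mid x)$ admits the same form in both cases, with an explicit bijection between the parameters of the canonical (reference, logistic, $Z_{r,s}$) model and those of the PCGLM of figure \ref{disting_tree}. Let $G_1=\{1,\ldots,r\}$, $G_2=\{r+1,\ldots,s\}$ and $G_3=\{s+1,\ldots,J\}$ denote the three children of the root, with $G_3$ acting as reference, and set $\alpha_J=0$ to encode the reference category in the multinomial logit parametrization.

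First I would write the canonical model explicitly. The block structure of $Z_{r,s}$ translates the equality constraints $\delta_1=\ldots=\delta_r=:\delta^{(1)}$, $\delta_{r+1}=\ldots=\delta_s=:\delta^{(2)}$ and $\delta_{s+1}=\ldots=\delta_J=:\delta^{(3)}=0$. Thus, for $j\in G_{i(j)}$,
\[
P(Y=j\mid x)\;=\;\frac{\exp\bigl(\alpha_j+x^t\delta^{(i(j))}\bigr)}{D(x)},\qquad
D(x)=\sum_{i=1}^{3}\exp\bigl(x^t\delta^{(i)}\bigr)\,S_i,
\]
where $S_i=\sum_{j\in G_i}\exp(\alpha_j)$. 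Summing over $j\in G_i$ gives $P(Y\in G_i\mid x)=\exp(x^t\delta^{(i)})\,S_i/D(x)$, so that
\[
\log\frac{P(Y\in G_i\mid x)}{P(Y\in G_3\mid x)}\;=\;\log(S_i/S_3)+x^t\delta^{(i)},\qquad i=1,2,
\]
which is exactly the root (reference, logistic, complete) GLM on the three groups, with intercepts $\log(S_i/S_3)$ and slopes $\delta^{(i)}$. Dividing $P(Y=j\mid x)$ by $P(Y\in G_i\mid x)$ then yields the conditional probabilities
\[
P(Y=j\mid Y\in G_i,x)\;=\;\exp(\alpha_j)/S_i,\qquad j\in G_i,
\]
which do not depend on $x$, i.e. minimal response models at the two non-root, non-singleton internal nodes. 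This establishes the forward direction: every canonical (reference, logistic, $Z_{r,s}$) distribution is realised by the PCGLM of figure \ref{disting_tree}.

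Conversely, I would start from an arbitrary PCGLM of that form. By proposition \ref{prt_identifiability} the two models have the same number $J-1$ of independent regression equations, so it suffices to exhibit a parameter mapping inverse to the one above. Given the root intercepts $\alpha^{(i)}$ and slopes $\delta^{(i)}$ ($i=1,2$) and, for each non-singleton $G_i$, the minimal-response probabilities $p_j^{(i)}=P(Y=j\mid Y\in G_i)$, I would set $\delta^{(3)}=0$ and define $\alpha_j$ by $\exp(\alpha_j)=p_j^{(i)}\cdot c_i$ for $j\in G_i$, where the normalising constants $c_i$ are chosen so that $\log(S_i/S_3)=\alpha^{(i)}$ for $i=1,2$ (using $\alpha_J=0$ to fix the scale in $G_3$). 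Plugging back into the product $P(Y=j\mid x)=P(Y=j\mid Y\in G_i,x)\,P(Y\in G_i\mid x)$ recovers the canonical formula above, which closes the equivalence.

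The main obstacle is purely bookkeeping: making sure the reference conventions line up (the reference category $J$ of the canonical model must sit inside the reference child $G_3$ of the root), and checking that the parameter bijection $(\alpha_j,\delta^{(i)})\leftrightarrow(\alpha^{(i)},\delta^{(i)},p_j^{(i)})$ is indeed one-to-one under these normalisations. Once the identities for $P(Y\in G_i\mid x)$ and $P(Y=j\mid Y\in G_i,x)$ are in hand, the equivalence is immediate from the factorisation $P(Y=j\mid x)=P(Y=j\mid Y\in G_i,x)\,P(Y\in G_i\mid x)$ that defines any PCGLM.
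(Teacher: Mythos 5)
\noindent Your proposal is correct and follows essentially the same route as the paper's proof: show that the group ratios $P(Y\in G_i\mid x)/P(Y\in G_3\mid x)$ reduce to a canonical (reference, logistic, complete) model at the root, that the within-group conditional probabilities are free of $x$ (minimal response models), and that the parameter map is invertible. The only difference is cosmetic --- you work with the probabilities and their group sums where the paper manipulates the ratios $\pi_j/\pi_J$ directly, and you spell out the inverse parametrization that the paper dispatches with ``as the parametrization is invertible.''
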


\begin{proof}
Assume that the distribution of $Y|X=x$ is defined by the canonical (reference, logistic, $Z_{r,s}$) model. We thus obtain
\begin{equation}
 \frac{\pi_j}{\pi_J} = \left\{ 
\begin{array}{ll}
\exp (\alpha_j + x^t \delta_1), & 1 \leq j \leq r, \\
\exp (\alpha_j + x^t \delta_2), & r < j \leq s, \\
\exp (\alpha_j), & s < j \leq J-1. \\
\end{array}
\right. \label{system} 
\end{equation}
Let $\mathfrak{T}$ denote the partition tree of figure \ref{disting_tree} and $\Omega_1$, $\Omega_2$ and $\Omega_3$ the children of the $\mathfrak{T}$'s root. We thus obtain
\[ \frac{\pi_{\Omega_1}}{\pi_{\Omega_3}} = \frac{\pi_1+\ldots+\pi_r}{\pi_{s+1}+\ldots+\pi_J}. \]
Using equalities \eqref{system}, we obtain
\[ \frac{\pi_{\Omega_1}}{\pi_{\Omega_3}} = \frac{ \left\lbrace \sum_{j=1}^r \exp (\alpha_j + x^t \delta_1) \right\rbrace \pi_J}{\left\lbrace 1+ \sum_{j=s+1}^{J-1} \exp (\alpha_j) \right\rbrace\pi_J },\]
and thus 
\[ \frac{\pi_{\Omega_1}}{\pi_{\Omega_3}} = \exp(\alpha_1^{\prime} + x^t \delta_1^{\prime}),\]
using the following parametrization
\[  \left\{ 
\begin{array}{ll}
\alpha_1^{\prime} = \displaystyle \log\left\lbrace \frac{\sum_{j=1}^r \exp (\alpha_j)}{1+ \sum_{j=s+1}^{J-1} \exp (\alpha_j)}  \right\rbrace, \\
\delta_1^{\prime} = \delta_1.  \\
\end{array}
\right.\]
Similarly, we obtain $\pi_{\Omega_2} / \pi_{\Omega_3} = \exp(\alpha_2^{\prime} + x^t \delta_2^{\prime})$ with the parametrization
\[  \left\{ 
\begin{array}{ll}
\alpha_2^{\prime} = \displaystyle \log\left\lbrace \frac{\sum_{j=r+1}^s \exp (\alpha_j)}{1+ \sum_{j=s+1}^{J-1} \exp (\alpha_j)}  \right\rbrace, \\
\delta_2^{\prime} = \delta_2.  \\
\end{array}
\right.\]
Therefore, the root model is exactly the canonical (reference, logistic, complete) model. We want to ensure that we have a minimal response model for each non-terminal vertex of the second level. For the non-terminal vertex $\Omega_1=\{1,\ldots,r\}$, we have
\[ \frac{\pi_j}{\pi_r} = \frac{\pi_j}{\pi_J} \frac{\pi_J}{\pi_r} = \exp (\alpha_j + x^t \delta_1) \exp (-\alpha_r - x^t \delta_1) = \exp (\alpha_j-\alpha_r),\]
for $j<r$. These $r-1$ ratios do not depend on $x$ and therefore correspond exactly to the minimal response model. Similarly we have $ \pi_j/\pi_s = \exp(\alpha_j-\alpha_s) $ for $r<j<s$ and $ \pi_j/\pi_J = \exp(\alpha_j) $ for $s<j<J$. Then, $Y|X=x$ follows exactly the expected PCGLM. As the parametrization is invertible, we obtain the equivalence.
\end{proof}
Using this equivalence, the canonical (reference, logistic, $Z_{r,s}$) model is easily estimated. In fact, we need to transform the data, aggregating the response categories according to the partitioning sets $\Omega_1=\{1,\ldots,r\}$, $\Omega_2=\{r+1,\ldots,s\}$ and $\Omega_3=\{s+1,\ldots,J\}$. We then simply need to estimate the canonical (reference, logistic, complete) model using this new dataset (and also the three minimal response models of vertices $\Omega_1$, $\Omega_2$ and $\Omega_3$).

			\subsubsection{Extended indistinguishability procedure with PCGLM}
			The indistinguishability procedure specified with PCGLM can be viewed as a partitioning procedure. With this form, we see that the procedure uses the ordering assumption to partition the categories (only successive categories are aggregated) but the root model does not use the ordering assumption among the groups of categories. The canonical (reference, logistic, complete) model is appropriate for nominal categories \citep{peyhardi2014new}. Thus, we can define the same procedure with an ordinal model for the root, such as an adjacent (without logistic cdf), a cumulative, or a sequential model \citep{peyhardi2014new}. Some convergence problems of the Fisher's scoring algorithm may appear for cumulative models because the constraints $\eta_j(x)<\eta_{j+1}(x)$ are more difficult to check with a complete design matrix. Thus, we propose to use the indistinguishability procedure with the (cumulative, logistic, proportional) model to avoid these difficulties. Our procedure is more comparable to Anderson's procedure since he used the stereotype logit model which is often more parsimonious than the multinomial logit model (between proportional and complete design matrices).

Assume that we apply this procedure and we determine the best root partition for the vector $x$ of explanatory variables. We can say that categories of the same non-terminal vertex are indistinguishable with respect to $x$. But what about indistinguishability with respect to a subset of $x$? We therefore propose to select the best subset of $x$ for each non-terminal node. If this subset is non-empty, the procedure is restarted, otherwise the procedure is stopped. A final refinement step is then used to select $F$ in each non-terminal vertex to obtain a better fit. We illustrate this procedure with the back pain prognosis example in section \ref{appli}.

		\section{PCGLMs for partially-ordered data}\label{section_partially_ordinal}
			\subsection{PCGLM specification of the POS-PCM}
			In categorical data analysis, the case of nominal and ordinal data has already been investigated in depth while the case of partially-ordered data has been comparatively neglected. \citet{zhang} introduced the partitioned conditional model for partially-ordered set (POS-PCM). The main idea was to recursively partition the $J$ categories in order to obtain either ordinal or nominal models at each step. \citet{zhang} then used the odds proportional logit model for the total order case and the multinomial logit model for the no order case.
				
				They introduced the partially-ordered set theory into the GLM framework. A partially-ordered set (poset) $(P,\preceq)$ is summarized by an Hasse diagram. The order relation $j \preceq k$ is represented by an edge between the two vertices (categories) and vertex $k$ is above vertex $j$. A chain in a poset $(P,\preceq)$ is a totally ordered subset $C$ of $P$, whereas an antichain is a set $A$ of pairwise incomparable elements. Zhang and Ip defined an algorithm for categories partitioning which gave the following result:
\begin{prop}\label{zhang_prop}\citep{zhang}
A finite poset can always be partitioned into antichains that are totally weakly ordered.
\end{prop}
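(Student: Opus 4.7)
The plan is to argue by induction on $|P|$ using the standard level (height) decomposition of a finite poset. The base case $|P|=1$ is immediate: the single element forms a trivial antichain and nothing needs to be ordered.

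For the inductive step, I would let $A_1$ denote the set of minimal elements of $(P,\preceq)$. Since two distinct minimal elements cannot be comparable (either of them being below the other would contradict minimality of the upper one), $A_1$ is an antichain. Moreover, any finite nonempty poset has at least one minimal element, so $A_1$ is nonempty and the poset $(P\setminus A_1, \preceq)$ is strictly smaller. Applying the induction hypothesis to $P\setminus A_1$ then produces antichains $A_2,\ldots,A_m$ that partition $P\setminus A_1$ and are totally weakly ordered among themselves.

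It remains to verify that prepending $A_1$ preserves the total weak ordering. The key observation is that every $a \in P\setminus A_1$ fails to be minimal in $P$, so there exists $b \prec a$; iterating this step inside the finite poset eventually reaches a minimal element $b^* \in A_1$ with $b^* \prec a$. Consequently, every comparability between $A_1$ and any later $A_i$ is oriented from $A_1$ upward, which places $A_1$ uniformly below $A_2,\ldots,A_m$ in the weak-order sense of \citet{zhang}. Combined with the weak order already given among $A_2,\ldots,A_m$ by the induction hypothesis, this yields a total weak order on the full sequence $A_1,\ldots,A_m$, completing the induction.

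The only real obstacle here is conceptual rather than computational: one must pin down precisely the definition of \emph{totally weakly ordered} used by Zhang and Ip — essentially, that all comparabilities crossing between two antichains point in a single consistent direction — and then verify that minimal-element extraction satisfies this orientation property against every higher level. Once that definition is fixed, the height decomposition argument closes the proof without further technicalities.
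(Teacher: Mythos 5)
Your proof is correct: iterated extraction of minimal elements is exactly the level decomposition of the Hasse diagram that the paper attributes to Zhang and Ip's partitioning algorithm (the paper itself states this property with a citation and gives no proof of its own), and your orientation check --- that minimality of the elements of $A_1$ forces every cross-comparability with a later block to point upward --- is the right way to close the induction. Your closing caveat about pinning down ``totally weakly ordered'' is well placed but harmless here, since under the natural reading (the block partition induces a total preorder extending $\preceq$) the height decomposition delivers precisely that.
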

For any poset (with one component), there exists a partition tree $\mathfrak{T}$ of depth $2$ such that the siblings of the first level are totally weakly ordered and the siblings of the second level are not comparable. The categories are partitioned according to each level of the Hasse diagram (each level is an antichain). Since the antichains are totally (weakly) ordered between them, Zhang and Ip proposed using the odds proportional logit model. Within each antichain, the categories are not comparable, thus they proposed using the multinomial logit model. 

It should be noted that Property \ref{zhang_prop} holds only if the poset has one component. If there are two or more components, they must first be partitioned. Since these components are not comparable, they form an antichain. Thus, a previous level must be added to separate each component, using the multinomial logit model, and Property \ref{zhang_prop} must be used for each component. The depth of the partition tree is exactly $2$ if the poset has exactly one component, otherwise it is $3$. Finally, for any poset, \citet{zhang} proposed to associate a particular partitioned conditional model. This model is a particular PCGLM  with
\begin{itemize}
	\item A partition tree $\mathfrak{T}$ built from the Hasse diagram.
	\item A collection $\mathfrak{C}$ which alternates between the ordinal (cumulative, logistic, proportional) model and the nominal (reference, logistic, complete) model.
\end{itemize}
Figure \ref{POS-PCM} illustrates this association between a poset (equivalently an Hasse diagram) and the POS-PCM.


\begin{figure}[!h]
\centering
\includegraphics[width=0.35\textwidth]{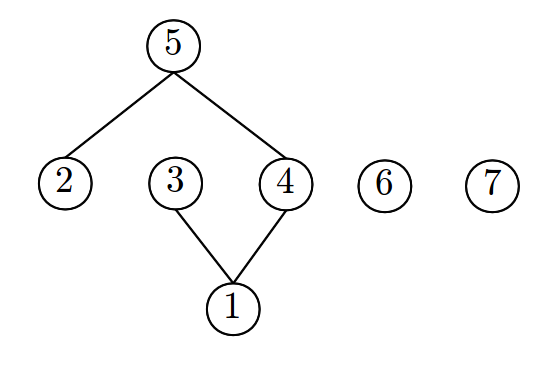}
\includegraphics[width=0.64\textwidth]{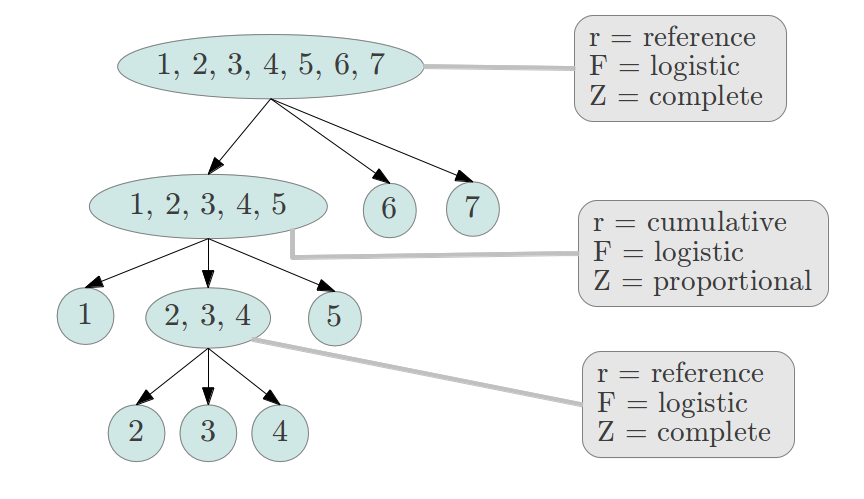}
\caption{\label{POS-PCM} Association between an Hasse diagram and a POS-PCM (specified in the PCGLM framework).}
\end{figure}

		\subsection{Inference of PCGLMs for partially-ordered data}
			\subsubsection{Poset structure and partition tree}\label{poset_tree}
			\citet{zhang} used poset structure information to define the POS-PCM. But how is this poset obtained? It is usual to have a nominal or ordinal response variable, but what does a partially-ordered variable mean? In fact, every partially-ordered variable $Y$ can be expressed in terms of elementary ordinal or nominal variables $Y_i$ (with at least one ordinal variable). For example, let $Y=(Y_1,Y_2)$ be a pair of ordinal variables. Let $a$, $b$, $c$ be the ordered categories of $Y_1$, and $1$, $2$, $3$ be the ordered categories of $Y_2$. The ordering relationship for $Y$ depends on the relation between $Y_1$ and $Y_2$.

	\paragraph*{$Y_1$ and $Y_2$ are not comparable} In this case the Cartesian product order is used. Let $y$ and $y^{\prime}$ be two observed responses. The Cartesian product order $\preceq_C$ is defined by
\[ y \preceq_C y^{\prime} \;\;\; \mbox{if} \;\;\; \left( y_1 \preceq y^{\prime}_1 \mbox{ and } y_2 \preceq y^{\prime}_2  \right). \]
In this case we can use the Property \ref{zhang_prop} to obtain the partition tree from the Hasse diagram in figure \ref{cartesian_order}.

\begin{figure}[!h]
\centering
\includegraphics[width=1\textwidth]{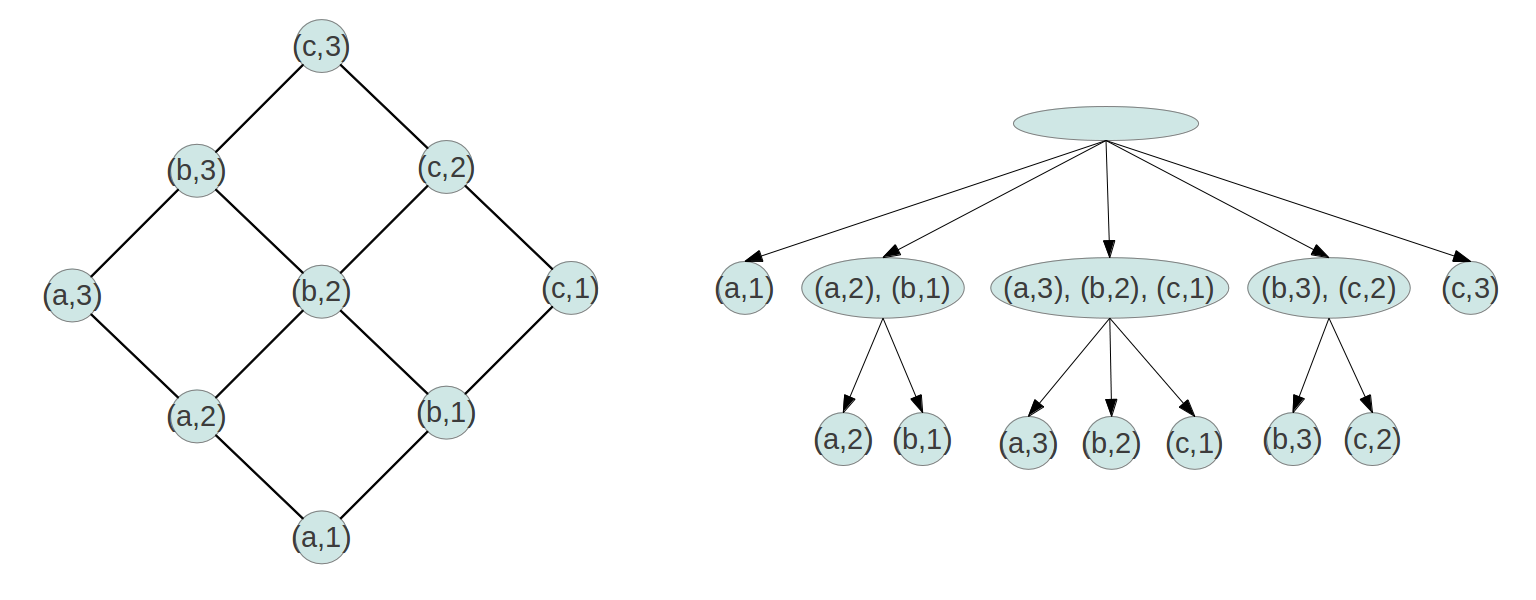}
\caption{Hasse diagram of Cartesian product order and corresponding partition tree.}\label{cartesian_order}
\end{figure}

	\paragraph*{$Y_1$ and $Y_2$ are ordered} In this case the lexicographic order has to be used. Assume that $Y_1 \preceq Y_2$ and let $y$ and $y^{\prime}$ be two observed responses. The lexicographic order $\preceq_L$ is defined by
\[ y \preceq_L y^{\prime} \;\;\; \mbox{if} \;\;\; \left( y_1 \preceq y^{\prime}_1 \right)  \mbox{ or } \left( y_1 = y^{\prime}_1 \mbox{ and }  y_2 \preceq y^{\prime}_2  \right). \]
In this case the order among the response categories is total. But a $2$-partition tree seems to be appropriated: with a first level for $Y_1$ and a second level for $Y_2|Y_1$ (see figure \ref{lexico_order}). The order among latent variables (shown in red) seems to have priority over the order among categories (shown in blue). A common slope $\delta$ can be considered (see section \ref{inference} for parameter estimation) because the same response variable $Y_2$ is involved in all the non-terminal vertices of the second level.

\begin{figure}[!h]
\centering
\includegraphics[width=1\textwidth]{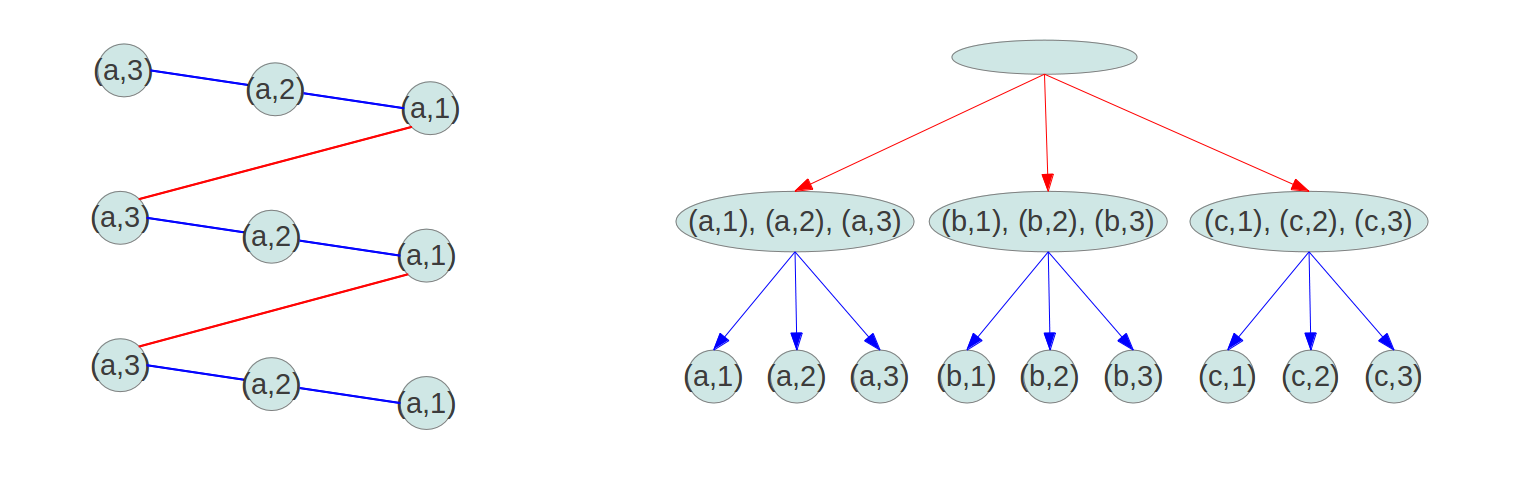}
\caption{Hasse diagram of lexicographic order and corresponding partition tree.}\label{lexico_order}
\end{figure}

			\subsubsection{Collection of models $\boldsymbol{\mathfrak{C}}$} 
			Given a non-terminal vertex $v$ of $\mathfrak{T}$, we choose an ordinal model if the children of $v$ are totally ordered. Otherwise we choose a nominal model. In this way \citet{zhang} used the odds proportional logit model in the ordinal case and the multinomial logit model in the nominal case. More generally, we propose to use the families of cumulative, sequential and adjacent (without logistic distribution) models for ordinal data and the family of reference models for nominal data \citep{peyhardi2014new}. 
			
%
%
%
%

	\section{Applications}\label{section_applications}
		\subsection{Totally ordered data: back pain prognosis example}\label{appli}
		\citet{doran1975} described a back pain study involving 101 patients. The response variable $y$ was the assessment of back pain after three weeks of treatment using the six ordered categories: \textit{worse} ($1$), \textit{same} ($2$), \textit{slight improvement} ($3$), \textit{moderate improvement} ($4$), \textit{marked improvement} ($5$), \textit{complete relief} ($6$). The three selected explanatory variables observed at the beginning of the treatment period were $x_1=$ length of previous attack (1=short, 2=long), $x_2=$ pain change (1=getting better, 2=same, 3=worse) and $x_3=$ lordosis (1=absent/decreasing, 2=present/increasing). 
		
		Here, the response categories are defined by the experimentalist and this ordinal scale may thus not be the most efficient to describe the back pain prognosis of a patient. Firstly, we will use the hierarchy among the categories shown in figure \ref{coarse_fine_scale} and select the best regression model for it. Secondly, we will select the hierarchy and at the same time the explanatory variables, using our extended indistinguishability procedure. We will thus compare the two results and the result obtained by \citet{anderson84}. 
		
			\subsubsection*{The case of known partition tree $\mathfrak{T}$}
			Here, the partition tree is \textit{a priori} defined with \{worse, same, improvement, complete relief\} at the first level, with improvement being partitioned into \{slight improvement, moderate improvement, marked improvement\} at the second level; see figure \ref{coarse_fine_scale}. We must select the best GLM for the root of $\mathfrak{T}$ and the non-terminal vertex \{slight improvement, moderate improvement, marked improvement\}. For these two vertices we have an ordinal scale, thus the most appropriate ratios are adjacent and cumulative. We chose the adjacent ratio in order to avoid algorithm difficulties with the complete design matrix, and the symmetric normal cdf, appropriate for ordinal data \citep{peyhardi2014new}. Since there were at most $K=3$ explanatory variables, we compared all $2^3$ combinations. Complete and proportional design matrices were tested for each combination. The variable $x_1$ was the only one selected for the two vertices with the complete design matrix. Since this explanatory variable was categorical, the model was exactly the saturated model. Therefore all the link functions were equivalent. Finally, the maximised log-likelihood was $l=-161.14$ for $10$ parameters. This partition tree does not seem to be appropriate for the data as only $x_1$ was selected for it, whereas $x_1$, $x_2$, $x_3$ were selected for the canonical $1$-partition tree. More precisely, the simple (cumulative, logistic, proportional) model had a log-likelihood of $-159.045$ for $8$ parameters, using the three explanatory variables.

			\subsubsection*{The case of unknown partition tree $\mathfrak{T}$}
			We will use this dataset to illustrate the extended indinguishability procedure which corresponds to a partition tree and variable selection procedure. Since $\mathfrak{T}$ must respect category ordering, the space of possible partition trees is reduced. During the procedure, only the ordinal (cumulative, logistic, proportional) model and the minimal response model (i.e. without explanatory variable) will be used in the collection $\mathfrak{C}$.
		
			\paragraph*{First level}
			Note that every PCGLM with only a root proportional model (and minimal response models for other non-terminal nodes) have exactly the same number of parameters: $J-1+p=8$. Thus, we simply use the log-likelihood to compare these models. We begin the procedure with the the simple model $\mathfrak{M}_0=$ (cumulative, logistic, proportional) that can be seen as a $1$-PCGLM. The corresponding log-likelihood is $l_0=-159.046$.
			
			Here we are looking for the best splitting point $r \in \{1,2,3,4,5\}$ for explanatory variables $x_1$, $x_2$ and $x_3$. Note that model $\mathfrak{M}_0$ corresponds exactly to the splitting point $r=J-1=5$ since all the $J-1$ slopes are common in this case. The best model is obtained for $r^*=4$ with log-likelihood $l_{r^*}=-158.132$. Since $l_{r^*} > l_0$, the splitting point $r^*$ is selected. We now look for the best splitting point $s \in \{1,2,3\} \cup \{5\}$ that gives three nodes. The best model is obtained for $s^*=1$ with log-likelihood $l_{s^*,r^*}=-155.756$. Since $ l_{s^*,r^*} > l_{r^*}$, the second splitting point $s^*$ is also selected. As every partitions in four groups are rejected, the best root partition is $\{1\} \cup \{2,3,4\} \cup \{5,6\}$ for explanatory variables $x_1$, $x_2$,$x_3$.
			
			\paragraph*{Second level}
			We now focus on the non-terminal vertices $v_1=\{2,3,4\}$ and $v_2=\{5,6\}$. We first select the subset of influential explanatory variables for these two nodes, using again the simple (cumulative, logistic, proportional) model with the Bayesian Information Criteria (BIC). As previously seen, the different models of collection $\mathfrak{C}$ can be estimated separately because the parameters $\beta_v$ are different for each non-terminal vertex $v$. The explanatory variable $x_2$ is selected for vertex $v_1$ and no variable is selected for vertex $v_2$. For vertex $v_2$, the minimal response model has a log-likelihood $l^{v_2}=-28.841$. Thus, we simply focus on vertex $v_1$ and obtain a log-likelihood $l_0^{v_1}=-54.561$ with the simple (cumulative, logistic, proportional) model, using only $x_2$.

			We now look for the best splitting point $t \in \{2,3,4\}$ of vertex $v_1$ for the explanatory variable $x_2$. The best model is obtained for $t^*=3$ with a log-likelihood $l_{t^*}^{v_1}=-54.31$. Since $l_{t^*}^{v_1} > l_0^{v_1}$, the splitting point $t^*$ is selected. This is the last possible partition for the second level of the partition tree.

			\paragraph*{Last level and refinement step}
			There is only the vertex $v_3=\{2,3\}$ at the third level with only the explanatory variable $x_2$. Since we have to reduce the set of explanatory variables, the minimal response model is estimated for this vertex with log-likelihood $l^{v_3}=-21.93$. The selection procedure of the partition tree and the explanatory variables is then stopped. The corresponding log-likelihood is $l=-153.418$ for $9$ parameters, with the logistic cdf for each node. We then execute a refinement step by selecting the best cdf $F$ for each vertex and determine the model $\mathcal{M}_*$ (see figure \ref{back_pain_PCGLM.png}) with log-likelihood $l_*=-152.727$ for $9$ parameters.
	
\begin{figure}[!h]
\centering
\includegraphics[width=0.7\textwidth]{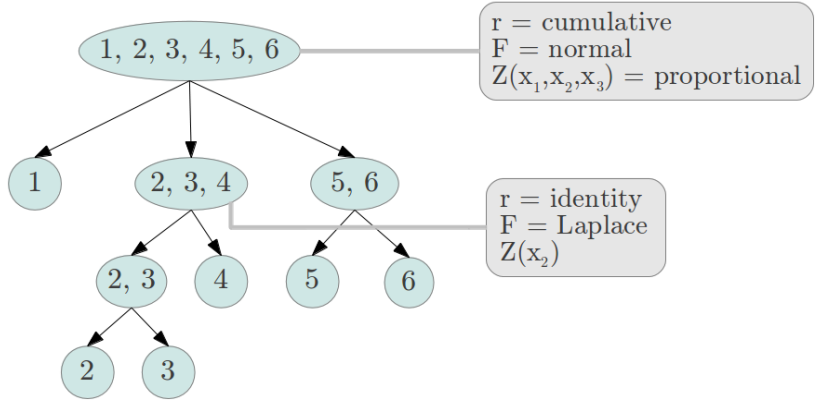}
\caption{PCGLM for back pain prognosis.}\label{back_pain_PCGLM.png}
\end{figure}	

Looking at the results obtained in the first part, it can be seen that the categories do not appear to be appropriate for describing back pain. In fact, in the second part, our results are similar to those of Anderson for the first step of the model: i.e. the partition \{worse\}, \{same, little imp., moderate imp.\}, \{slight imp., complete relief\} for the three explanatory variables. He obtained a log-likelihood of $-154.39$ for $9$ parameters. Our methodology allows us to go a step further and find a separation between \{same, little imp.\} and \{moderate imp.\} according to pain change $x_2$. Looking at the partition tree in figure \ref{back_pain_PCGLM.png}, we propose a new ordinal scale of four categories: worse $=\{1\}$, same $=\{2,3\}$, improvement $=\{4\}$ and relief $=\{5,6\}$, which seems to be better suited.

		\subsection{Partially-ordered data: pear tree example}
		The class of PCGLMs for categorical data is so vast that we need a method to determine the structure of the model. We first propose to select the partition tree $\mathfrak{T}$ and then the collection $\mathfrak{C}$ of models. We illustrate this methodology using the pear tree example. 
		
			\subsubsection*{Selection of the partition tree $\boldsymbol{\mathfrak{T}}$}
			Axillary production of the pear tree can be decomposed using three binary unobservable variables $Y_1$, $Y_2$ and $Y_3$. Firstly, the bud either stays in the latent state or becomes a branch ($Y_1 \in$ \{latent bud, branching\}). If branching occurs, then $Y_2$ denotes the branch elongation ($Y_2 \in$ \{short, long\}) and $Y_3$ denotes the spiny character of the branch ($Y_3 \in$ \{unspiny, spiny\}). The variables $Y_2$ and $Y_3$ are clearly conditioned with respect to $Y_1$ because if we have a latent bud, axillary production is over. We chose to use the order relationship to build a partition tree. The variables $Y_1$ and $Y_2$ are naturally ordinal, whereas it is not manifest for $Y_3$. Using the Cartesian product order among $(Y_1,Y_2,Y_3)$ we obtain a partial order among $Y$. Depending on whether $Y_3$ is considered as a nominal or an ordinal variable, we obtain two posets structure and thus two Hasse diagrams $\mathfrak{D}_1$ and $\mathfrak{D}_2$ (see figure \ref{hasse_diagrams}). Using Property \ref{zhang_prop} described by \citet{zhang}, we obtain two corresponding partition trees $\mathfrak{T}_1$ and $\mathfrak{T}_2$ (see figure \ref{partition_trees}). 
			
\begin{figure}[!h]
\centering
\includegraphics[width=0.7\textwidth]{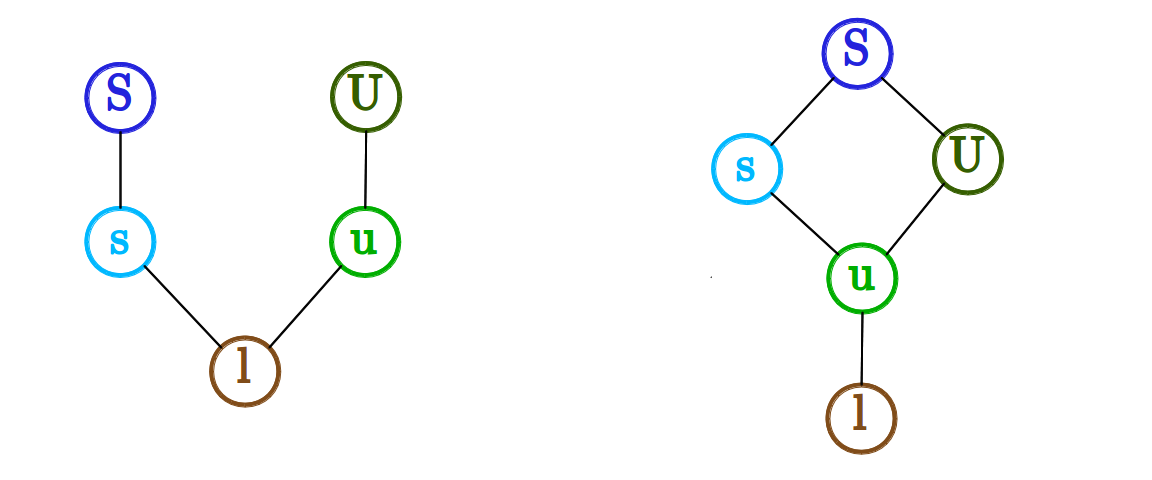}
\caption{Two Hasse diagrams $\mathfrak{D}_1$ and $\mathfrak{D}_2$ for the response categories of the pear tree example.}\label{hasse_diagrams}
\end{figure}	
			
\begin{figure}[!h]
\centering
\includegraphics[width=0.9\textwidth]{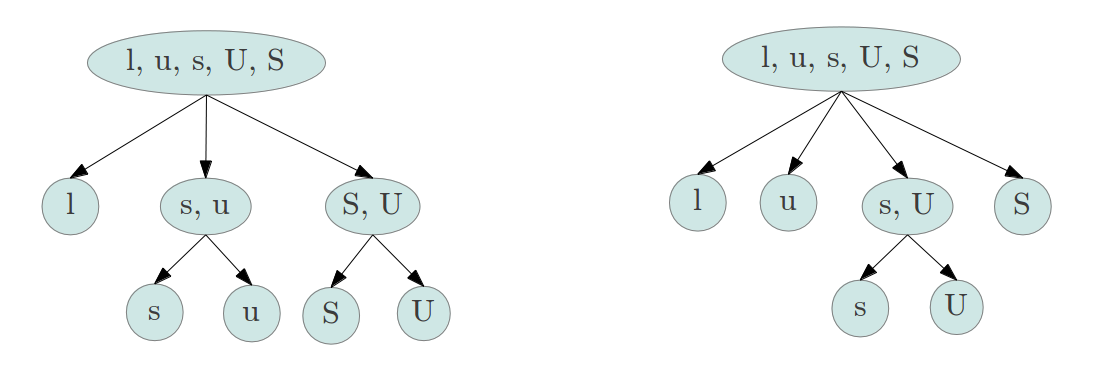}
\caption{Two partition trees $\mathfrak{T}_1$ and $\mathfrak{T}_2$ for the response categories of the pear tree example.}\label{partition_trees}
\end{figure}

			We now need to select the best partition tree. ''It should be noted that the assumption of a logit model on both levels yields a model that is not equivalent to a one-step logit model" \citep{tutz2011}. Therefore, we compare these two partition trees with the simple $1$-partition tree $\mathfrak{T}_0$. For now, we simply want to compare the different partitioned conditional structure without modelling assumption for each non-terminal node. We therefore use the canonical (reference, logistic, complete) model for the three partition trees since this model is invariant under all permutation \citep{peyhardi2014new}. Thus it is not necessary to test different permuted partition trees. Moreover, the log-likelihood is globally concave for all canonical models and thus we avoid algorithm convergence difficulties. Finally, the three models $\mathfrak{M}_i$, corresponding to each partition tree  $\mathfrak{T}_i$ ($i=0,1,2$), have exactly the same number of parameters ($(J-1)(1+p)=12$), thus we can use the log-likelihood as criteria. We obtain respectively $l_0=-2087.42$, $l_1=-2083.20$, $l_2=-2089.61$ , selecting $\mathfrak{T}_1$ as the best partition tree.

			\subsubsection*{Selection of the models collection $\boldsymbol{\mathfrak{C}}$}
			As the partition tree is fixed ($\mathfrak{T} = \mathfrak{T}_1$), we must select one model for each non-terminal vertex of $\mathfrak{T}$. We first select the explanatory variables for each non-terminal node, using BIC. For each explanatory variable $x_k$, we estimate the model with $x_k$ (using the complete design) or without. Thus, we must test $2^K$ models for each non-terminal node, where $K$ is the number of explanatory variables. 
In our example, $K=2$, thus all combinations are tested, again using the canonical (reference, logistic, complete) model for the same reasons as previously. The $2^2=4$ combinations are: no effect ($\emptyset$), effect of the first variable ($x_1$), effect of the second variable ($x_2$) and effect of both variables ($x_1,x_2$). As the parameters $\beta_v$ for each vertex $v \in \mathcal{V}^*$ are different, the collection models can be estimated separately. BIC values for the root vertex are respectively: $\mbox{BIC}_{\emptyset}=-1497.79$, $\mbox{BIC}_{x_1}=-1339.45$, $\mbox{BIC}_{x_2}=-1449.22$ and $\mbox{BIC}_{x_1,x_2}=-1329.97$. Thus, for the root node, $x_1$ and $x_2$ are selected but we note that internode length ($x_1$) is more important than distance to growth unit end ($x_2$) when distinguishing between latent bud ($y=l$), short shoot ($y\in\{u,s\}$) and long shoot ($y\in\{U,S\}$). Following the same approach, only $x_2$ is selected for the two others GLMs of the collection. This means that the transformation into spine is influenced by growth unit end, and not by the internode length.
			
			We must now select the $(r,F,Z)$ model for each non-terminal vertex of $\mathfrak{T}$. First, we select the ratio, using the order relationship among the partition tree $\mathfrak{T}$. The siblings of the first level are totally (weakly)-ordered, thus we must use an adjacent, cumulative or sequential ratio. Axillary production is well represented by a sequential mechanism, and therefore we use the sequential ratio. The complete design matrix is preferred to the proportional design matrix using BIC. Finally, we select the best cdf $F$ in a refinement step. For the second level of $\mathfrak{T}$, the siblings are not comparable. We could use the reference ratio, but there are only two siblings for each node, thus all the ratios are equivalent. In fact, in the Bernoulli situation, given the vector of explanatory variable $x$, a GLM is fully specified by the cdf $F$ only. After selecting the cdf $F$ for the two last non-terminal nodes, we obtain the model $\mathfrak{M}^*$ (summarized in figure \ref{model_M*}) with BIC value: $BIC_*=-2109.58$. Finally, the selected model $\mathfrak{M}^*$ has a better log-likelihood than the classical multinomial logit model ($l_*=-2072.19$ versus $l_0=-2087.42$) with fewer parameters ($10$ versus $12$). 
			
\begin{figure}[!h]
\centering
\includegraphics[width=0.7\textwidth]{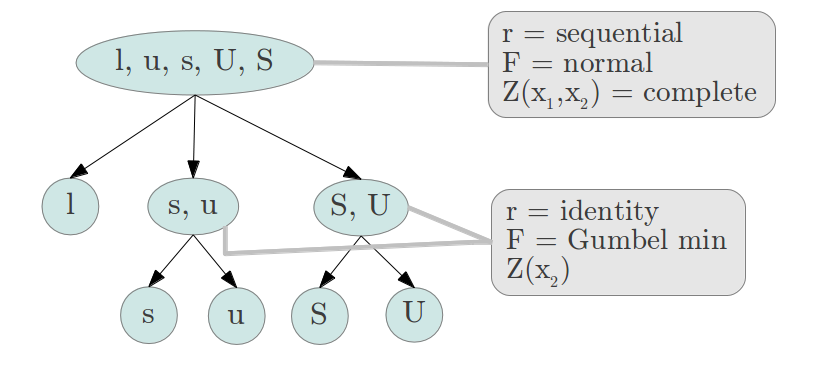}
\caption{PCGLM for pear tree data.}\label{model_M*}
\end{figure}	

\pagebreak

We also obtain a better interpretation with this model. The axillary production of the pear tree can be decomposed into two levels. Production first follows a sequential mechanism, choosing between latent bud, short shoot and long shoot, which is strongly influenced by internode length (the longer the internode, the longer the axillary branching). The axillary shoot then differentiates into unspiny or stays spiny shoot depending on distance to growth unit end.

	\section{Discussion}
	PCGLMs constitute a flexible and interpretable framework for analysing categorical data. Explanatory variables can be selected at each non-terminal node. An explanatory variable may thus have an effect on one partition of categories, not on another. It should be borne in mind that the non effect of a variable is as interesting as the effect. PCGLMs are thus more parsimonious than simple GLMs. Regarding other regression models, various variable selection procedures can be applied to PCGLMs. Because of the small number of explanatory variables ($K=2,3$), we used BIC and tested all the combinations in our examples. With a higher number of explanatory variables, methods to reduce of the predictor space, or regularization methods, should be used \citep{tutz2011}. Moreover, the decomposition into several steps makes the interpretation easier, using a sequential latent mechanism approach, and also leads to a better fit. If the underlying sequential process can be interpreted as conditioning of latent variables ($Y_2|Y_1\in v$ and $Y_2|Y_1\in v^{\prime}$), a common effect on two vertices ($\beta^v=\beta^{v^{\prime}}$) can be considered.
	
	Except in this last case, PCGLMs can be easily estimated. After rearranging the data by partitioning and conditioning, classical algorithms can be applied for each data subset. For the simplest and most common case $\beta^v \neq \beta^{v^{\prime}}$, the different algorithms may be parallelized and running time is thus reduced. Moreover, a canonical GLM with a block structured design matrix can be written as a PCGLM with simple design matrices (see lemma \ref{lemma}) that is easier to estimate. 

\vspace*{0.2cm}

An important issue with PCGLMs is selecting the partition tree. The tree may be determined \textit{a priori}, as in classical approaches. The two-step model, for instance, relies on an \textit{a priori} known hierarchy among ordered categories. The nested logit model aggregates categories that are similar (i.e. influenced by the same variables). Finally, a POS-PCM associates a partition tree to an Hasse diagram (poset). But defining this poset from the corresponding latent process is not an easy task. In most applications the partition tree is not \textit{a priori} known and should thus be selected. The proposed approach for selecting the partition tree and the variables could be used in the supervised classification context with ordered classes. The indistinguishability procedure selects the best splitting between categories, starting from the entire set $\{1,\ldots,J\}$. Alternatively, we may aggregate adjacent categories, starting from singletons $\{j\}$. 

Finally, caution should be exercised to penalize log-likelihood. Let us consider the context of BIC penalization. The total number of observations $n$ should \textit{a priori} be used. But if an explanatory variable influences a non terminal vertex $v$ associated with a small proportion of the observations ($n_v \ll n $), should we incorporate a term related to $n_v$ in the penalty?

\bibliographystyle{apalike}
\bibliography{../biblio_phD}

\begin{thebibliography}{}

\bibitem[Agresti, 2002]{agresti}
Agresti, A. (2002).
\newblock {\em Categorical data analysis}, volume 359.
\newblock John Wiley and Sons.

\bibitem[Anderson, 1984]{anderson84}
Anderson, J.~A. (1984).
\newblock Regression and ordered categorical variables.
\newblock {\em Journal of the Royal Statistical Society. Series B
  (Methodological)}, pages 1--30.

\bibitem[Bergstrom and Bagnoli, 2005]{log-concavity}
Bergstrom, T. and Bagnoli, M. (2005).
\newblock Log-concave probability and its applications.
\newblock {\em Economic theory}, 26:445--469.

\bibitem[Debreu, 1960]{debreu1960}
Debreu, G. (1960).
\newblock Review of rd luce, individual choice behavior: A theoretical
  analysis.
\newblock {\em American Economic Review}, 50(1):186--188.

\bibitem[Doran and Newell, 1975]{doran1975}
Doran, D.~M. and Newell, D.~J. (1975).
\newblock Manipulation in treatment of low back pain: a multicentre study.
\newblock {\em British Medical Journal}, 2(5964):161.

\bibitem[Fahrmeir and Tutz, 2001]{tutz_book}
Fahrmeir, L. and Tutz, G. (2001).
\newblock {\em Multivariate statistical modelling based on generalized linear
  models}.
\newblock Springer Verlag.

\bibitem[McFadden, 1974]{mcfadden74}
McFadden, D. (1974).
\newblock Conditional logit analysis of qualitative choice analysis.
\newblock {\em Frontiers in Econometrics}, pages 105--142.

\bibitem[McFadden et~al., 1978]{mcfadden78}
McFadden, D. et~al. (1978).
\newblock {\em Modelling the choice of residential location}.
\newblock Institute of Transportation Studies, University of California.

\bibitem[Morawitz and Tutz, 1990]{morawitz}
Morawitz, B. and Tutz, G. (1990).
\newblock Alternative parameterizations in business tendency surveys.
\newblock {\em Mathematical Methods of Operations Research}, 34(2):143--156.

\bibitem[Peyhardi et~al., 2014]{peyhardi2014new}
Peyhardi, J., Trottier, C., and Gu{\'e}don, Y. (2014).
\newblock A new specification of generalized linear models for categorical
  data.
\newblock {\em arXiv preprint arXiv:1404.7331}.

\bibitem[Pratt, 1981]{pratt1981concavity}
Pratt, J.~W. (1981).
\newblock Concavity of the log likelihood.
\newblock {\em Journal of the American Statistical Association},
  76(373):103--106.

\bibitem[Tutz, 1989]{tutz1989compound}
Tutz, G. (1989).
\newblock Compound regression models for ordered categorical data.
\newblock {\em Biometrical Journal}, 31(3):259--272.

\bibitem[Tutz, 2012]{tutz2011}
Tutz, G. (2012).
\newblock {\em Regression for categorical data}, volume~34.
\newblock Cambridge University Press.

\bibitem[Zhang and Ip, 2012]{zhang}
Zhang, Q. and Ip, E.~H. (2012).
\newblock Generalized linear model for partially ordered data.
\newblock {\em Statistics in Medicine}.

\end{thebibliography}

\end{document}